\newtheorem{lem}{Lemma}
\newtheorem{thm}{Theorem}
\newtheorem{definition}{Definition}
\def\<{\leqslant}           
\def\>{\geqslant}           
\def\wh{\widehat}
\def\wt{\widetilde}
\def\~{\wt{~}}
\def\Re{\mathrm{Re}}   
\def\cH{\mathcal{H}}   
\def\mR{\mathbb{R}}    
\def\mC{\mathbb{C}}    
\def\fH{\mathfrak{H}}   
\def\Tr{\mathrm{Tr}}       
\def\rT{\mathrm{T}}        
\def\[[[{[\![\![}
\def\]]]{]\!]\!]}
\def\bra{{\langle}}
\def\ket{{\rangle}}
\def\rd{{\rm d}}        
\def\fa{\mathfrak{a}}
\def\x{\times}
\def\ox{\otimes}
\def\fF{\mathfrak{F}}
\def\mD{\mathbb{D}}
\def\mbT{\mathbfit{T}}
\def\cR{\mathcal{R}}
\def\cRH{\mathcal{RH}}
\def\cP{{\mathcal P}}
\def\cQ{\mathcal{Q}}
\def\cA{\mathcal{A}}
\def\cS{{\mathcal S}}
\DeclareMathAlphabet      {\mathbfit}{OML}{cmm}{b}{it}
\DeclareMathAlphabet      {\mathbfd}{OT1}{cmr}{bx}{n}
\title{\LARGE\bf Parameterization of Stabilizing Linear Coherent Quantum Controllers}
\author{Arash Kh. Sichani, \qquad Ian R. Petersen, \qquad Igor G. Vladimirov
\thanks{This work is supported by the Australian Research Council and Air Force Office of Scientific Research. The authors are with UNSW Canberra, Australia: {\tt arash\_kho@hotmail.com, i.r.petersen@gmail.com, igor.g.vladimirov@gmail.com}.}
}
\begin{document}
\maketitle
\thispagestyle{empty}
\pagestyle{plain}

\begin{abstract}
    This paper is concerned with application of the classical Youla-Ku\v{c}era parameterization to finding a set of linear coherent quantum controllers that stabilize a linear quantum plant. The plant and controller are assumed to represent open quantum harmonic oscillators modelled by linear quantum stochastic differential equations. The interconnections between the plant and the controller are assumed to be established through  quantum bosonic fields. In this framework, conditions for the stabilization of a given linear quantum plant via linear coherent quantum feedback are addressed using a stable factorization approach. The class of stabilizing quantum controllers is parameterized in the frequency domain. Also, this approach is used in order to formulate coherent quantum weighted $\cH_2$ and $\cH_\infty$ control problems for linear quantum systems in the frequency domain. Finally, a projected gradient descent  scheme is proposed to solve the coherent quantum weighted $\cH_2$ control problem.
\end{abstract}


\section{INTRODUCTION}\label{sec:Intro}

    Coherent quantum feedback control builds on the idea of constructing a feedback loop from the interconnection of quantum systems through field coupling \cite{L_2000}. This technique avoids loss of quantum information in conversion to classical signals  which occurs during measurement, as a direct consequence of the projection postulate in quantum mechanics \cite{M_1998}. The coherent quantum control approach aims at developing systematic methods to design measurement-free interconnections of quantum systems modelled by quantum stochastic differential equations (QSDEs); see for example \cite{JNP_2008,NJP_2009,P_2010}.
    Quantum-optical components, such as optical cavities, beam splitters and phase shifters, make it possible to implement quantum feedback systems governed by linear QSDEs \cite{P_1992,M_2008,P_2010}, provided the latter represent open quantum harmonic oscillators \cite{EB_2005,GZ_2004}. This important class of linear QSDEs models the Heisenberg evolution of pairs of conjugate operators in a multi-mode quantum  harmonic oscillator that is coupled to external bosonic fields. The condition of physical realizability (PR) of a linear QSDE as an open quantum harmonic oscillator is organised as a set of constraints on the coefficients of the QSDE \cite{JNP_2008} or, alternatively, on the quantum system transfer matrix \cite{SP_2012} in the frequency domain.
    Coherent quantum feedback control problems, such as stabilization and robust controller design, are of particular interest in linear quantum control theory \cite{JNP_2008,P_2010}. These problems are amenable to transfer matrix design methods \cite{yan2003I, yan2003II,GJN10, P_2010}. There are classical approaches to control synthesis problems for linear multivariable systems based on the transfer matrix of the system \cite{ZDG_1996}. In particular, one of the important strategies in controller design for such systems is the stable factorization approach \cite{vid2011}.
    The central idea of the factorization approach is to represent the transfer matrix of a system as a ratio of stable rational matrices. This idea gives rise to a methodology which leads to the solution of several important control problems; see \cite{vid2011}. One of the fundamental results used in the factorization approach to classical control is the parameterization of all stabilizing controllers, which is known as the Youla-Ku\v{c}era parameterization.
    The Youla-Ku\v{c}era parameterization was developed originally in the frequency domain for finite-dimensional linear time-invariant systems using transfer function methods, see \cite{youla76I,youla76II}, and generalized to infinite-dimensional systems afterwards \cite{desoer80,quadrat2003,vid2011}. The state space representation of all stabilizing controllers has also been addressed for finite-dimensional, linear time-invariant \cite{nett84} and time-varying \cite{dale93} systems. Furthermore, the approach was shown to be applicable to a class of nonlinear systems \cite{hammeri85,paice90,anderson98}.
    In the present paper, we employ the stable factorization approach in order to develop a quantum counterpart of the classical Youla-Ku\v{c}era parameterization for describing a set of linear coherent quantum controllers that stabilize a linear quantum system. In particular, we address the problem of coherent quantum stabilizability of a given linear quantum plant. The class of stabilizing controllers is parameterized in the frequency domain. This approach allows weighted $\cH_2$ and $\cH_\infty$ coherent quantum control problems to be formulated for linear quantum systems in the frequency domain. In this way, the weighted $\cH_2$ and $\cH_\infty$ control problems are reduced to constrained optimization problems with respect to the Youla-Ku\v{c}era parameter with convex cost functionals. Moreover, these problems are organised as a constrained version of the model matching problem \cite{francis87}. Finally, a projected gradient descent scheme is proposed to solve the weighted $\cH_2$ coherent quantum control problem in the frequency domain.

\section{NOTATION}\label{sec:not}

Vectors are assumed to be organised as columns unless specified otherwise, and the transpose $(\cdot)^{\rT}$ acts on matrices with operator-valued entries as if the latter were scalars. For a vector $a$ of operators $a_1, \ldots, a_r$ and a vector $b$ of operators $b_1, \ldots, b_s$, the commutator matrix
$
    [a,b^{\rT}]
    :=
    ab^{\rT} - (ba^{\rT})^{\rT}
$
is an $(r\x s)$-matrix whose $(j,k)$th entry is the commutator
$
    [a_j,b_k]
    :=
    a_jb_k - b_ka_j
$ of the operators $a_j$ and $b_k$.
Furthermore, $(\cdot)^{\dagger}:= ((\cdot)^{\#})^{\rT}$ denotes the transpose of the entry-wise operator adjoint $(\cdot)^{\#}$. When it is applied to complex matrices,  $(\cdot)^{\dagger}$ reduces to the complex conjugate transpose  $(\cdot)^*:= (\overline{(\cdot)})^{\rT}$. Also, $I_r$ denotes the identity matrix of order $r$, and
$
J_r:=
{\small\begin{bmatrix}
    I_r & 0\\
    0 & -I_r
\end{bmatrix}}
$ is a signature matrix.
The 
Frobenius inner product of real or complex matrices is denoted by
$
    \bra M,N\ket_{\rm F}
    :=
    \Tr(M^*N)
$ and generates the Frobenius norm $\|\cdot\|_{\rm F}$. Matrices of the form
${\small\begin{bmatrix}
    R_1 & R_2 \\
    \overline{R}_2 & \overline{R}_1
\end{bmatrix}}
$
are denoted by $\Delta(R_1,R_2)$. The imaginary unit is denoted by $i:= \sqrt{-1}$, and the $(j,k)$th block of a matrix $\Gamma$ is referred to as $\Gamma_{jk}$. 
The notation
$
    {\small\left[
    \begin{array}{c|c}
          A & B \\
          \hline
          C & D
    \end{array}
    \right]}
$
refers to a state space realization of the corresponding transfer matrix $\Gamma(s) := C(sI-A)^{-1}B+D$ with a complex variable $s \in \mC$. The conjugate system transfer matrix $(\Gamma(-\overline{s}))^*$ is written as $\Gamma^{\~}(s)$. The Hardy space of (rational) transfer functions of type $p=2,\infty$ is denoted by $\cH_p$ (respectively, $\cRH_p$). The symbol $\ox$ is used for the tensor product of spaces. 

\section{LINEAR QUANTUM STOCHASTIC SYSTEMS}\label{sec:system}

The open quantum systems under consideration are governed by linear QSDEs which model the dynamics of open quantum harmonic oscillators. The input-output maps of such systems can be described in the frequency domain by using the transfer function approach  \cite{yan2003I,yan2003II}. We will now outline this framework which is used as a basis of the frequency domain synthesis approach to quantum control presented in this paper.

\subsection{Open Quantum Harmonic Oscillators}

 Corresponding to a model of $n$ independent quantum harmonic oscillators is a vector $\fa$ of annihilation operators $\fa_1, \ldots, \fa_n$ on Hilbert spaces $\fH_1, \ldots, \fH_n$. The adjoint $\fa_j^\dagger$ of the operator $\fa_j$ is referred to as the creation operator. The doubled-up vector $\breve{\fa}$ of the annihilation and creation operators satisfies the canonical commutation relations (CCRs) \cite{M_1998}
\begin{equation}
\label{aaa}
    [
        \breve{\fa},
        \breve{\fa}^{\dagger}
    ]
    :=
    {\small\begin{bmatrix}
        [\fa,\fa^{\dagger}] & [\fa,\fa^{\rT}]\\
        [\fa^{\#},\fa^{\dagger}] & [\fa^{\#},\fa^{\rT}]
    \end{bmatrix}}
    = J_n,
    \qquad
    \breve{\fa}:=
    {\small\begin{bmatrix}
        \fa\\ \fa^{\#}\end{bmatrix}}.
\end{equation}
We consider a linear quantum system whose dynamic variables are linear combinations of the annihilation and creation operators, acting on the tensor product space $\fH:= \fH_1\ox \ldots \ox \fH_n$:
\begin{equation}
\label{FFa}
    a
    :=
    F_1 \fa+ F_2 \fa^{\# }
    =
    \begin{bmatrix}
        F_1 & F_2
    \end{bmatrix}
    \breve{\fa},
\end{equation}
where $F_1$ and $F_2$ are appropriately dimensioned complex matrices.
The relations (\ref{aaa}) and (\ref{FFa}) imply that
$$
    [\breve{a}, \breve{a}^{\dagger}]
    =
    F
    [\breve{\fa}, \breve{\fa}^{\dagger}]
    F^*
    =
    FJ_n F^*
    =:
    \Theta,
$$
where $F:= \Delta(F_1,F_2) \in \mC^{2n \times 2n}$ in accordance with the doubled-up notation \cite{GJN10},  and the complex Hermitian matrix $\Theta$ of order $2n$  is the generalized CCR  matrix \cite{SP_2012}.
Now, consider an $n$-mode open quantum harmonic oscillator interacting with an external bosonic field defined on a Fock space \cite{P_1992}. The oscillator is assumed to be coupled to $m$ independent external input bosonic fields acting on the tensor product space $\fF:=\fF_1 \ox\ldots \ox\fF_m$, where $\fF_j$ denotes the Fock space associated with the $j$th input channel. The field annihilation operators $\cA_1(t), \ldots, \cA_m(t)$, which act on $\fF$, form a vector $\cA_{\rm in}(t)$. Their adjoints  $\cA_1^{\dagger}(t), \ldots, \cA_m^{\dagger}(t)$, that is, the field creation operators, comprise a vector $\cA_{\rm in}^\#(t)$. The field annihilation and creation operators are adapted to the Fock filtration and satisfy the It\^{o} table $
    \rd
    \breve{\cA}_{\rm in}(t)
    \rd
    \breve{\cA}_{\rm in}^{\dagger}(t)
    =
    {\small\begin{bmatrix}
        I_m & 0\\
        0 & 0
    \end{bmatrix}}
    \rd t
$ in terms of the corresponding doubled-up vector $\breve{\cA}_{\rm in}(t):=   {\small\begin{bmatrix}
        \cA_{\rm in}(t)\\
        \cA_{\rm in}^{\#}(t)
    \end{bmatrix}}$.
The linear QSDEs, derived from the joint evolution of the $n$-mode open quantum harmonic oscillator and the external bosonic fields in the Heisenberg picture, can be represented in the following form \cite{SP_2012,GJN10}:
\begin{align}
\label{equ:tdomain_model:1}
    \rd
    \breve{a}(t)
    &= A \breve{a}(t) \rd t+
        B \rd \breve{\cA}_{\rm in}(t),\\
\label{equ:tdomain_model:2}
    \rd
    \breve{\cA}_{\rm out}(t)
    &= C \breve{a}(t)\rd t+
   			 D \rd \breve{\cA}_{\rm in}(t).
\end{align}
Here, the first QSDE governs the plant dynamics, while the second QSDE describes the dynamics of the output fields in terms of the corresponding doubled-up vector
$
\breve{\cA}_{\rm out}(t)
    :=
    {\small \begin{bmatrix}
        \cA_{\rm out}(t)\\
        \cA_{\rm out}^{\#}(t)
    \end{bmatrix}}
$
of annihilation and creation operators acting on the system-field composite   space $\fH\ox \fF$. Also, the matrices $A \in \mC^{2n\x 2n}$, $B \in \mC^{2n\x 2m}$, $C \in \mC^{2m\x 2n}$, $D \in \mC^{2m\x 2m}$ in (\ref{equ:tdomain_model:1}) and (\ref{equ:tdomain_model:2}) are given by
\begin{align}
    \label{equ:ABCD}
    {\small\begin{bmatrix}
        A & B\\
        C & D
    \end{bmatrix}}
    :=
    {\small\begin{bmatrix}
        -i\Theta H - \frac{1}{2} \Theta L^*J_m L & -\Theta L^* J_m \Delta(S,0)\\
        L & \Delta(S,0)
    \end{bmatrix}},
%
\end{align}
where $H=H^*=\Delta(H_1,H_2) \in \mC^{2n \times 2n}$ is a Hermitian matrix which parameterizes the system Hamiltonian operator $\frac{1}{2}\breve{a}^{\dagger} H\breve{a}$, the matrix $L=\Delta(L_1,L_2) \in \mC^{2m \times 2n}$ specifies the system-field coupling operators, and $S \in \mC^{m \times m}$ is the unitary scattering matrix.

\subsection{Open Quantum Harmonic Oscillator in the Frequency Domain and Physical Realizability}\label{subsec:LQHO}

The input-output map of the open quantum harmonic oscillator, governed by the linear QSDEs (\ref{equ:tdomain_model:1}) and (\ref{equ:tdomain_model:2}), is completely specified by the transfer matrix \cite{yan2003I,yan2003II,P_2010,GJN10} which is defined in the standard way  as
    \begin{equation}
    \label{equ:hoc_freq}
        \Gamma(s) :=
    {\small\left[
    \begin{array}{c|c}
          A & B\\
          \hline
          C & D
    \end{array}
    \right]},
    \end{equation}
where the matrices $A, B, C, D$ are given by the $(S,L,H)$-parameterization (\ref{equ:ABCD}).
In view of the specific structure of this parameterization, not every linear QSDE, or the system transfer matrix (\ref{equ:hoc_freq}) with an arbitrary quadruple $(A,B,C,D)$, represents the dynamics of an open quantum harmonic oscillator. This fact is addressed in the form of PR conditions for the quadruple $(A,B,C,D)$ to represent such an oscillator; see \cite{JNP_2008,NJP_2009} for more details.

For the purpose of the present paper, it is convenient to take advantage of the frequency domain version of the PR constraints on the system transfer matrices. For its formulation, we will need an auxiliary notion.

\begin{definition}
\label{def:gen}
The matrix $A$ and the state-space realization (\ref{equ:hoc_freq}) are said to be \emph{spectrally generic} if  the spectrum $\sigma(A)$ has no intersection with its mirror reflection
about the imaginary axis in the complex plane:
$
    \sigma(A)\bigcap \left(-\overline{\sigma(A)} \right) = \emptyset
$,
that is, $\lambda+\overline{\nu} \ne 0$ for all eigenvalues $\lambda,\nu\in \sigma(A)$. \hfill$\square$
\end{definition}

In particular, a spectrally generic matrix $A$ can not have purely imaginary eigenvalues.
Also, we will use a special class of complex matrices of order $2m$:
\begin{equation}
\label{mD}
    \mD_m := \big\{\Delta(S,0):\ S \in \mC^{m\x m}\ {\rm is\ unitary}\big\}.
\end{equation}

\begin{lem}
\label{lem:PR_Freq}
\cite[Theorem~4, p.~2040]{SP_2012}
Suppose $\Gamma$ is a square transfer matrix of order $2m$ with a spectrally generic minimal state-space realization (\ref{equ:hoc_freq}).
Then $\Gamma$ represents an open quantum harmonic oscillator in the frequency domain if and only if
	\begin{equation}
		\label{equ:JJUnit}
		\Gamma^{\~}(s)J_m\Gamma(s)=J_m
	\end{equation}
for all $s\in \mC$,
	and
	the feedthrough matrix $D = \Gamma(\infty)$ belongs to the set $\mD_m$ in (\ref{mD}). \hfill $\square$
\end{lem}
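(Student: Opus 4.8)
The plan is to translate the frequency-domain identity (\ref{equ:JJUnit}) into an equivalent set of algebraic state-space relations by means of a $J_m$-lossless version of the Kalman--Yakubovich--Popov (KYP) lemma, and then to match these relations against the time-domain physical realizability constraints encoded in the $(S,L,H)$-parameterization (\ref{equ:ABCD}). The \emph{necessity} direction will then reduce to a direct computation, while the \emph{sufficiency} direction is where the minimality and spectral genericity hypotheses do the substantive work.

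For necessity, suppose $\Gamma$ is the transfer matrix of an open quantum harmonic oscillator, so that it admits some realization of the form (\ref{equ:ABCD}) with Hermitian $\Theta$ and $H$ and unitary $S$. Since the value of a transfer matrix is independent of the chosen realization, I would substitute (\ref{equ:ABCD}) directly into $\Gamma^{\~}(s)J_m\Gamma(s)$, using the conjugate-system realization $\left[\begin{array}{c|c} -A^* & -C^* \\ \hline B^* & D^* \end{array}\right]$ for $\Gamma^{\~}$. The feedthrough term gives $D^*J_mD=\Delta(S,0)^*J_m\Delta(S,0)=J_m$ because $S$ is unitary, so $D\in\mathbb{D}_m$. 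The remaining frequency-dependent terms are then shown to cancel by invoking the structural identities implied by (\ref{equ:ABCD}): namely $C=L$, the coupling relation $B=-\Theta C^*J_mD$, and the two Lyapunov-type identities $A\Theta+\Theta A^*+BJ_mB^*=0$ and $\Theta C^*+BJ_mD^*=0$, which together force $\Gamma^{\~}J_m\Gamma$ to collapse to the constant $J_m$.

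For sufficiency, suppose (\ref{equ:JJUnit}) holds for all $s$ and that $D=\Gamma(\infty)\in\mathbb{D}_m$. Minimality of the realization lets me apply the $J_m$-lossless KYP lemma: (\ref{equ:JJUnit}) is equivalent to the existence of a Hermitian $X$ with $A^*X+XA+C^*J_mC=0$, $XB+C^*J_mD=0$, and $D^*J_mD=J_m$. The role of spectral genericity is exactly that $\sigma(A)\cap(-\overline{\sigma(A)})=\emptyset$ renders the Sylvester operator $X\mapsto A^*X+XA$ invertible, so the solution $X$ of the first relation is \emph{unique}; minimality then forces this $X$ to be \emph{nonsingular}. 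Setting $\Theta:=X^{-1}$, reading $S$ off from the block structure $D=\Delta(S,0)$ (available since $D\in\mathbb{D}_m$), and taking $L:=C$, I would recover $H$ from $-i\Theta H=A+\tfrac12\Theta L^*J_mL$ and verify, using the KYP relations, that $H$ is Hermitian and that $(A,B,C,D)$ is precisely of the form (\ref{equ:ABCD}), so that $\Gamma$ represents an open quantum harmonic oscillator.

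The main obstacle lives in the sufficiency argument, in the two structural facts that are cheap to state but carry the content: that the unique $X$ from the KYP lemma is nonsingular, and that the reconstructed $H$ is genuinely Hermitian. Nonsingularity must be extracted from minimality (controllability and observability of the realization) rather than from the Lyapunov equation alone, because $J_m$ is \emph{indefinite} and the positive-definiteness arguments of the bounded-real or positive-real KYP setting are unavailable; I expect to argue that any nontrivial kernel vector of $X$ would expose a nonminimal mode. The Hermitian property of $H$ then follows by substituting $\Theta=X^{-1}$ into $A^*X+XA+C^*J_mC=0$ and symmetrizing. Spectral genericity is precisely what excludes the degenerate, non-unique solutions (in particular, purely imaginary eigenvalues of $A$) that would otherwise break this reconstruction.
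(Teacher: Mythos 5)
First, a point of comparison: the paper itself offers no proof of this lemma --- it is quoted from \cite[Theorem~4]{SP_2012} and the terminal $\square$ marks the absence of an in-paper argument --- so there is nothing internal to measure your attempt against line by line. Judged on its own terms, your outline follows the standard strategy (and essentially that of the cited reference): for necessity, a direct computation using the identities $A\Theta+\Theta A^*+BJ_mB^*=0$ and $\Theta C^*+BJ_mD^*=0$ implied by (\ref{equ:ABCD}), which indeed collapse $\Gamma J_m\Gamma^{\~}$ to $J_m$ without requiring $\Theta$ to be invertible; for sufficiency, a $J$-lossless KYP/Sylvester argument in which spectral genericity gives uniqueness of the Hermitian solution $X$ of $A^*X+XA+C^*J_mC=0$, and observability gives its nonsingularity (via $\ker X\subseteq\ker C$, which follows from $XB+C^*J_mD=0$ and the invertibility of $D$, together with the $A$-invariance of $\ker X$). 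Those steps are sound, and you correctly identify why the indefiniteness of $J_m$ blocks the usual positive-real shortcut to nonsingularity.

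The genuine gap is in the final step of sufficiency. Producing a nonsingular Hermitian $X$ with $\Theta:=X^{-1}$, a Hermitian $H$, and $L:=C$ satisfying the algebraic relations of (\ref{equ:ABCD}) is not yet the same as representing an open quantum harmonic oscillator. The construction of Section~\ref{sec:system} additionally requires the doubled-up structure $H=\Delta(H_1,H_2)$, $L=\Delta(L_1,L_2)$ and, crucially, that $\Theta$ be a bona fide generalized CCR matrix, i.e.\ $\Theta=FJ_nF^*$ for an invertible $F=\Delta(F_1,F_2)$, which forces $\Theta$ to have signature $(n,n)$ and to respect the conjugation symmetry of the annihilation--creation pairing. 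An arbitrary nonsingular Hermitian solution of the KYP relations need not have this inertia or symmetry; these properties must be extracted from the $(\cdot)^{\#}$-symmetry of the realization of $\Gamma$ together with the condition $D\in\mD_m$, possibly after a state-space similarity that your sketch never introduces. Until you show that the unique $X$ inherits the required $\Delta$-structure and $(n,n)$ inertia, the sufficiency direction establishes only an abstract $J$-lossless realization, not physical realizability as an oscillator.
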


A transfer function $\Gamma$, satisfying the condition (\ref{equ:JJUnit}), is said to be $(J_m,J_m)$-unitary; see, for example, \cite{kim97} and references therein. Since we consider this property for square transfer matrices (in which case, (\ref{equ:JJUnit}) implies that $|\det \Gamma(i\omega)|=1$, and hence, $\Gamma(i\omega)$ is nonsingular for any $\omega \in \mR$), $(J_m,J_m)$-unitarity is equivalent to its dual form \cite{kim97,P_2010}:
\begin{equation}
\label{dual}
		\Gamma(s)J_m \Gamma^{\~}(s)=J_m.
\end{equation}
In view of (\ref{equ:JJUnit}) and (\ref{dual}),  the feedthrough matrix $D$ in Lemma~\ref{lem:PR_Freq} inherits $(J_m,J_m)$-unitarity $DJ_mD^* = D^*J_m D=J_m$  from the transfer function $\Gamma$ by continuity. Moreover,  for an arbitrary matrix $S\in \mC^{m\x m}$,  the matrix $D=\Delta(S,0)$ is $(J_m,J_m)$-unitary if and only if $S$ is unitary.
In what follows, the argument $s$ of transfer functions will often be omitted for brevity.

\section{FEEDBACK INTERCONNECTION}\label{sec:closedsys}

We will now consider a linear quantum plant and a linear quantum controller with square transfer matrices $P$ and $K$, respectively, each representing an open quantum harmonic oscillator in the frequency domain.
By analogy with similar structures in classical control settings, we partition the vectors $\cA_{\rm Pin}$ and $\cA_{\rm Pout}$ of the plant input and output field annihilation operator processes in accordance with Fig.~\ref{fig:modifiedsystem}:
\begin{equation}
\label{AA}
		\cA_{\rm Pin}=
        {\small\begin{bmatrix}
			\cA_r\\
			\cA_u
		\end{bmatrix}},
    \qquad
		\cA_{\rm Pout}=
		{\small\begin{bmatrix}
			\cA_z\\
			\cA_y
		\end{bmatrix}}.
\end{equation}
Here $\cA_r$, $\cA_z$, $\cA_y$, $\cA_u$ denote the vectors of annihilation operators of the input and output fields of the closed-loop system, and the input and output of the controller, which correspond to the classical reference, output, observation and control signals, respectively.
In order to bring the feedback interconnection to a standard format, the plant transfer matrix $P$ is modified to $\cP$ as
\begin{align}
		\nonumber
        \cP
        &~=
        {\small \left [
        \begin{array} {c c}
            \cP_{11} & \cP_{12}\\
            \cP_{21}&\cP_{22}
        \end{array}
        \right ]} \nonumber
        :=
        {\small\left[
        \begin{array}{c c : c c}
            P_{11} & P_{13} &  P_{12} & P_{14}\\
            P_{31} & P_{33} &  P_{32} & P_{34}\\
            \hdashline
            P_{21} & P_{23} &  P_{22} & P_{24}\\
            P_{41} & P_{43} &  P_{42} & P_{44}
        \end{array}	
        \right]}\\
        \nonumber
        &~=
    {\small\left [
    \begin{array}{c | c : c}
        A    & \begin{bmatrix} B_1 & B_3 \end{bmatrix} & \begin{bmatrix} B_2 & B_4 \end{bmatrix}\\[1mm]
        \hline\\[-2.5mm]
        \begin{bmatrix} C_1 \\ C_3 \end{bmatrix} & \begin{bmatrix} D_{11} & 0 \\
        	0 & \overline{D}_{11} \end{bmatrix}  & \begin{bmatrix} D_{12} & 0 \\
        	 0 & \overline{D}_{12} \end{bmatrix}  \\ [3.5mm]
		\hdashline \\ [-2.5mm]
        \begin{bmatrix} C_2 \\ C_4 \end{bmatrix} & \begin{bmatrix} D_{21} & 0 \\ 0 & \overline{D}_{21} 	\end{bmatrix}  & \begin{bmatrix} D_{22} & 0 \\ 0 & \overline{D}_{22} \end{bmatrix}
    \end{array}
    \right ]}
    \label{equ:modplnt} \\
    &=:\,
    {\small\left [
    \begin{array}{c|c c}
          A & \mathbfit{B}_1 & \mathbfit{B}_2\\
          \hline
          \mathbfit{C}_1 & \mathbfit{D}_{11} & \mathbfit{D}_{12} \\
          \mathbfit{C}_2 & \mathbfit{D}_{21} & \mathbfit{D}_{22}
    \end{array}
    \right]}.
\end{align}
The interconnection of the modified plant and the controller is shown in Fig.~\ref{fig:system}.
The closed-loop transfer matrix between the exogenous inputs and outputs of interest can be calculated through the lower linear fractional transformation (LFT) of the modified plant and the controller in the frequency domain \cite{ZDG_1996,GJN10}:
\begin{equation}
\label{G}
    G
    =
    \cP_{11} + \cP_{12}K(I-\cP_{22}K)^{-1}\cP_{21}
	=:
    {\rm LFT}(\cP,K).
\end{equation}
Note that, similarly to the classical case, the interconnection in Fig.~\ref{fig:system}
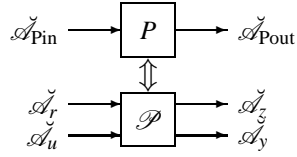
\begin{figure}[htpb]
\begin{center}
\unitlength=0.7mm
\linethickness{0.4pt}
\begin{picture}(30.00,40.00)
    \put(10,10){\framebox(10,10)[cc]{\small$\cP$}}
    \put(10,27){\framebox(10,10)[cc]{\small$P$}}
    \put(15,23.5){\makebox(0,0)[cc]{{$\Updownarrow$}}}
    \put(-2,32){\makebox(0,0)[rc]{{\small$\breve{\cA}_{\rm Pin}$}}}
    \put(32,32){\makebox(0,0)[lc]{{\small$\breve{\cA}_{\rm Pout}$}}}
    \put(0,32){\vector(1,0){10}}
    \put(20,32){\vector(1,0){10}}
    \put(0,12){\vector(1,0){10}}
    \put(0,18){\vector(1,0){10}}
    \put(20,12){\vector(1,0){10}}
    \put(20,18){\vector(1,0){10}}
    \put(-2,12){\makebox(0,0)[rc]{{\small$\breve{\cA}_u$}}}
    \put(32,12){\makebox(0,0)[lc]{{\small$\breve{\cA}_y$}}}
    \put(-2,18){\makebox(0,0)[rc]{{\small$\breve{\cA}_r$}}}
    \put(32,18){\makebox(0,0)[lc]{{\small$\breve{\cA}_z$}}}
\end{picture}\vskip-5mm
\caption{This diagram depicts the way in which the original plant $P$ is modified to $\cP$ by partitioning the vectors of the plant input and output field operators in (\ref{AA}). This modified structure allows for the connection to another linear quantum system which acts as the controller.}
\label{fig:modifiedsystem}
\end{center}
\end{figure}
\begin{figure}[htpb]
\begin{center}
\unitlength=0.7mm
\linethickness{0.4pt}
\begin{picture}(30.00,35.00)
    \put(10,25.5){\framebox(10,10)[cc]{\small$\cP$}}
    \put(10,10){\framebox(10,10)[cc]{\small$K$}}
    \put(0,33){\vector(1,0){10}}
    \put(20,33){\vector(1,0){10}}
    \put(0,28){\vector(1,0){10}}
    \put(20,28){\line(1,0){10}}
    \put(30,28){\line(0,-1){13}}
    \put(30,15){\vector(-1,0){10}}
    \put(10,15){\line(-1,0){10}}
    \put(0,15){\line(0,1){13}}
    \put(-2,15){\makebox(0,0)[rc]{{\small$\breve{\cA}_u$}}}
    \put(32,15){\makebox(0,0)[lc]{{\small$\breve{\cA}_y$}}}
    \put(-2,33){\makebox(0,0)[rc]{{\small$\breve{\cA}_r$}}}
    \put(32,33){\makebox(0,0)[lc]{{\small$\breve{\cA}_z$}}}
\end{picture}\vskip-5mm
\caption{This diagram depicts the fully quantum closed-loop system  which is the interconnection of the modified quantum plant $\cP$ and the quantum controller $K$. The effect of the environment on the closed-loop system is represented by $\breve{\cA}_r$.}
\label{fig:system}
\end{center}
\end{figure}
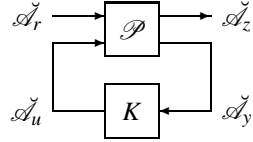
provides a general framework for the feedback interconnection of two quantum systems, one of which acts as the plant and the other as the controller. This framework includes the conventional coherent quantum feedback interconnection shown in Fig.~\ref{fig:routing}.
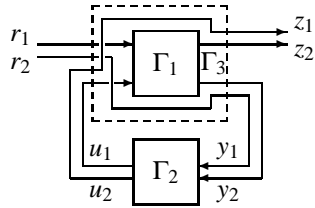
\begin{figure}[htpb]
\begin{center}
\unitlength=0.85mm
\linethickness{0.4pt}
\begin{picture}(30.00,45.00)
    \put(3.5,23){\dashbox(21,18)[cc]{}}
    \put(10,27){\framebox(10,10)[cc]{\small$\Gamma_1$}}
    \put(17.5,27){\makebox(10,10)[cc]{\small$\Gamma_3$}}
    \put(10,10){\framebox(10,10)[cc]{\small$\Gamma_2$}}
    \put(5.5,35){\vector(1,0){4.5}}
    \put(-5,35){\line(1,0){9.5}}

    \put(-5,33){\line(1,0){9.5}}
    \put(5.5,33){\line(1,0){1}}
    \put(6.5,33){\line(0,-1){8}}
    \put(6.5,25){\line(1,0){15.5}}
    \put(22,25){\line(0,1){2}}
    \put(22,27){\line(1,0){6}}
    \put(28,27){\line(0,-1){11}}

    \put(20,35){\vector(1,0){14}}
    \put(7,29){\vector(1,0){3}}
    \put(20,29){\line(1,0){10}}
    \put(30,29){\line(0,-1){15}}
    \put(28,16){\vector(-1,0){8}}
    \put(30,14){\vector(-1,0){10}}
    \put(10,14){\line(-1,0){10}}
    \put(10,16){\line(-1,0){8}}
    \put(2,16){\line(0,1){13}}
    \put(2,29){\line(1,0){4}}

    \put(0,14){\line(0,1){17}}
    \put(0,31){\line(1,0){5}}
    \put(5,31){\line(0,1){8}}
    \put(5,39){\line(1,0){17}}
    \put(22,39){\line(0,-1){2}}
    \put(22,37){\vector(1,0){12}}

    \put(23,17){\makebox(0,0)[lb]{{\small$y_1$}}}
    \put(23,13){\makebox(0,0)[lt]{{\small$y_2$}}}
    \put(3,17){\makebox(0,0)[lb]{{\small$u_1$}}}
    \put(3,13){\makebox(0,0)[lt]{{\small$u_2$}}}

    \put(-6,35){\makebox(0,0)[rb]{{\small$r_1$}}}
    \put(-6,33){\makebox(0,0)[rt]{{\small$r_2$}}}
    \put(35,37){\makebox(0,0)[lb]{{\small$z_1$}}}
    \put(35,35){\makebox(0,0)[lt]{{\small$z_2$}}}
\end{picture}\vskip-5mm
\caption{This diagram depicts the way in which the quantum system $\Gamma_3$, the concatenation of $\Gamma_1$ and the feedthroughs, is formed by grouping the exogenous inputs to and outputs from the closed-loop system.}
\label{fig:routing}
\end{center}
\end{figure}
In the latter figure, the exogenous inputs and outputs of the closed-loop system are grouped together. Here, $\Gamma_3$ (which is the concatenation of $\Gamma_1$ and the feedthroughs) and $\Gamma_2$ represent the transfer matrices $\cP$ and $K$ of the modified quantum plant and the quantum controller, respectively.
Also, note that the PR conditions are usually formulated for the case when the number of exogenous inputs to the closed-loop system is not less than the number of outputs of the controller \cite{JNP_2008,NJP_2009}.

\section{PARAMETERIZATIONS OF STABILIZING CONTROLLERS}\label{sec:cntlpara}

For the purposes of Section~\ref{sec:QYK}, we will now briefly review the Youla-Ku\v{c}era parameterization of classical stabilizing controllers together with related notions. The latter include stabilizability, detectability, internal stability, coprime factorizations and matrix fractional descriptions (MFDs). Despite the quantum control context, these notions will be used according to their standard definitions in classical linear control theory \cite{ZDG_1996,vid2011}.

\subsection{Stabilizability of Feedback Connections}

Consider the two-two block of the modified plant transfer matrix $\cP$ in (\ref{equ:modplnt}) given by
\begin{equation}
    \label{equ:G22}
    \cP_{22} :=
        {\small\begin{bmatrix}
            P_{22} & P_{24} \\
            P_{42} & P_{44}
        \end{bmatrix}}
    =
    {\small\left[
        \begin{array}{c | c}
            A               & \mathbfit{B}_2 \\
            \hline
            \mathbfit{C}_2  & \mathbfit{D}_{22}
        \end{array}
    \right]}.
\end{equation}
The following lemma provides a necessary and sufficient condition for the internal stability of the feedback system in Fig.~\ref{fig:system}.

\begin{lem} \label{lem:equi} \cite[Lemma~12.2, p.~294]{ZDG_1996} Suppose $(A,\mathbfit{B}_2,\mathbfit{C}_2)$ in (\ref{equ:G22}) is stabilizable and detectable. Then the closed-loop system in Fig.~\ref{fig:system} is internally stable if and only if so is the system in Fig.~\ref{fig:system:stab}.\hfill$\square$
\end{lem}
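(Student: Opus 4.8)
The plan is to show that both configurations share the \emph{same} closed-loop state matrix, and that the stabilizability/detectability hypothesis is precisely what makes internal stability of each equivalent to this common matrix being Hurwitz. First I would fix a minimal realization $(A_K,B_K,C_K,D_K)$ of $K$, with state $x_K$, and write the interconnection equations of $\cP$ (state $x_P$, state matrix $A$) and $K$ in Fig.~\ref{fig:system}, assuming well-posedness, i.e. invertibility of $I-\mathbfit{D}_{22}D_K$; set $R:=(I-\mathbfit{D}_{22}D_K)^{-1}$. Eliminating the control and observation signals gives closed-loop equations with state $\xi:=[x_P^{\rT}\ x_K^{\rT}]^{\rT}$ and state matrix
\[
    \wt A
    =
    {\small\begin{bmatrix}
        A+\mathbfit{B}_2 D_K R\,\mathbfit{C}_2 & \mathbfit{B}_2(I+D_K R\,\mathbfit{D}_{22})C_K\\
        B_K R\,\mathbfit{C}_2 & A_K+B_K R\,\mathbfit{D}_{22}C_K
    \end{bmatrix}}.
\]
The key structural observation is that $\wt A$ depends only on the data $(A,\mathbfit{B}_2,\mathbfit{C}_2,\mathbfit{D}_{22})$ of $\cP_{22}$ in (\ref{equ:G22}) together with the controller matrices, whereas the exogenous data $\mathbfit{B}_1,\mathbfit{C}_1,\mathbfit{D}_{11},\mathbfit{D}_{12},\mathbfit{D}_{21}$ enter only the feedforward matrices of the $w\mapsto z$ map. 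Thus $\wt A$ coincides with the closed-loop state matrix of the reduced loop formed by $\cP_{22}$ and $K$, which is the system of Fig.~\ref{fig:system:stab}.

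Second, I would reduce internal stability of each configuration to Hurwitzness of $\wt A$. For the reduced loop this is the standard fact \cite[Ch.~5]{ZDG_1996}: injecting and observing at the controller input and output produces a realization on $\wt A$ that is controllable from and observable at these nodes, so internal stability holds if and only if $\wt A$ is Hurwitz. For the full system the ($\Leftarrow$) direction is immediate, since if $\wt A$ is Hurwitz then every closed-loop transfer matrix, including $w\mapsto z$ and the loop-breaking maps, is realized on $\wt A$ and hence lies in $\cRH_\infty$.

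The ($\Rightarrow$) direction for the full system is where the hypothesis enters and is the main obstacle. Internal stability of the full system makes the loop-breaking transfer matrices stable, which forces the subspace of $\wt A$ controllable from and observable at the loop nodes to be Hurwitz; what must be excluded is an eigenvalue $\lambda$ of $\wt A$ with $\Re\lambda\ge 0$ hidden from these nodes. A Popov--Belevitch--Hautus argument, using minimality of $(A_K,B_K,C_K)$, reduces any such hidden unstable mode to an eigenvalue of $A$ that is either uncontrollable through $\mathbfit{B}_2$ or unobservable through $\mathbfit{C}_2$. But stabilizability of $(A,\mathbfit{B}_2)$ forces every $\mathbfit{B}_2$-uncontrollable eigenvalue of $A$ to have negative real part, and detectability of $(A,\mathbfit{C}_2)$ forces every $\mathbfit{C}_2$-unobservable eigenvalue to have negative real part, so no such hidden unstable mode exists and $\wt A$ is Hurwitz. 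The delicate step is exactly this reduction: the off-diagonal coupling in $\wt A$ must be shown neither to create nor to destroy uncontrollable/unobservable directions relative to $(A,\mathbfit{B}_2)$ and $(A,\mathbfit{C}_2)$. Since $\wt A$ is the common closed-loop state matrix and internal stability of each configuration is equivalent to its being Hurwitz, the two internal-stability properties coincide.
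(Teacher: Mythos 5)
The paper does not prove this lemma at all --- it is imported verbatim from \cite[Lemma~12.2]{ZDG_1996} --- so the relevant comparison is with the textbook argument, and your outline follows that argument's skeleton: the two configurations share the closed-loop state matrix $\wt A$ (your formula for $\wt A$ is correct), internal stability of each is reduced to $\wt A$ being Hurwitz, and the stabilizability/detectability hypothesis is what rules out hidden unstable modes. One step, however, is misstated in a way that matters. You claim that breaking the loop at the controller's input and output ``produces a realization on $\wt A$ that is controllable from and observable at these nodes,'' and on that basis assert the reduced-loop equivalence \emph{without} invoking the hypothesis. This is false in general: an eigenvalue of $A$ that is uncontrollable through $\mathbfit{B}_2$ or unobservable through $\mathbfit{C}_2$ remains uncontrollable/unobservable in the loop-broken realization on $\wt A$, so all four loop-breaking transfer matrices of the $\cP_{22}$--$K$ loop can lie in $\cRH_\infty$ while $\wt A$ has unstable eigenvalues. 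The correct statement is that, under the lemma's hypothesis (together with minimality of the chosen realization of $K$), the loop-broken realization is stabilizable and detectable --- not controllable and observable --- and that this weaker property already gives ``all loop-breaking maps stable $\iff$ $\wt A$ Hurwitz.'' In other words, the PBH argument you deploy for the ($\Rightarrow$) direction of the full system is exactly the argument needed for the reduced loop as well; the hypothesis is consumed there, not only in the full-system direction. Once you relocate that argument (or simply cite \cite[Lemma~5.2--Corollary~5.4]{ZDG_1996} with the hypothesis attached), the proof closes, since the exogenous channels $\mathbfit{B}_1,\mathbfit{C}_1,\mathbfit{D}_{11},\mathbfit{D}_{12},\mathbfit{D}_{21}$ indeed affect only the input/output maps and not $\wt A$, as you correctly observe.
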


\begin{figure}[htpb]
\begin{center}
\unitlength=0.7mm
\linethickness{0.4pt}
\begin{picture}(30.00,35.00)
    \put(10,25){\framebox(10,10)[cc]{$\cP_{22}$}}
    \put(10,10){\framebox(10,10)[cc]{$K$}}
    \put(0,30){\vector(1,0){10}}
    \put(20,30){\line(1,0){10}}
    \put(30,30){\line(0,-1){15}}
    \put(30,15){\vector(-1,0){10}}
    \put(10,15){\line(-1,0){10}}
    \put(0,15){\line(0,1){15}}
    \put(-2,23){\makebox(0,0)[rc]{{$\breve{\cA}_u$}}}
    \put(32,23){\makebox(0,0)[lc]{{$\breve{\cA}_y$}}}
\end{picture}\vskip-5mm
\caption{Equivalent stabilization diagram.}
\label{fig:system:stab}
\end{center}
\end{figure}
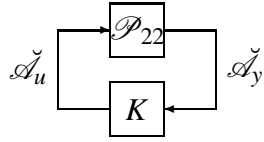


\subsection{Stable Factorization}

Let the transfer function $\cP_{22}$ in (\ref{equ:G22}) have the following coprime factorizations over $\cR \cH_\infty$:
\begin{equation}
\label{fact}
    \cP_{22}
    =
    NM^{-1}
    =
    \wh{M}^{-1}\wh{N},
\end{equation}
where the pairs $(N,M)$ and $(\wh{N}, \wh{M})$ of transfer functions in $\cR \cH_\infty$ specify the right and left factorizations, respectively. Then there exist $U, V, \wh{U}, \wh{V} \in \cR \cH_\infty$ which satisfy the B\'{e}zout identities:
\begin{equation}
        \label{equ:Bezout}
        \wh{V}M-\wh{U}N = I,
        \qquad
        \wh{M}V - \wh{N} U = I.
\end{equation}
The following lemma provides sufficient conditions for the existence of stable coprime factors for the system $\cP_{22}$.

\begin{lem} \cite[p. 318]{ZDG_1996}
    Suppose $(A,\mathbfit{B}_2,\mathbfit{C}_2)$ in (\ref{equ:G22}) is stabilizable and detectable. Then the coprime factorizations of $\cP_{22}$ over $\cR \cH_\infty$, described by (\ref{fact}), (\ref{equ:Bezout}), can be chosen so as
    \begin{align}
    	\label{equ:rightcopairs}
        {\small\begin{bmatrix}
               M & U\\
               N  & V
        \end{bmatrix}}
        & =
        {\small\left[
        \begin{array}{c|c c}
              A+\mathbfit{B}_2F & \mathbfit{B}_2 & -L\\
              \hline
                 F & I & 0\\
              \mathbfit{C}_2+\mathbfit{D}_{22}F & \mathbfit{D}_{22} & I
        \end{array}
        \right]},\\
	    \label{equ:leftcopairs}
        {\small\begin{bmatrix}
            \wh{V} & -\wh{U}\\
            -\wh{N}  & \wh{M}
        \end{bmatrix}}
        & =
        {\small\left[
        \begin{array}{c|c c}
              A+L\mathbfit{C}_2 & -(\mathbfit{B}_2+L\mathbfit{D}_{22}) & L\\
              \hline
                 F & I & 0\\
                 \mathbfit{C}_2 & -\mathbfit{D}_{22} & I
        \end{array}
        \right]},
    \end{align}
    where $F \in \mC^{\mu \times 2n}$ and $L \in \mC^{2n \times \mu}$ are such that both matrices $A+\mathbfit{B}_2F$ and $A+L\mathbfit{C}_2$ are Hurwitz. Furthermore, the systems in (\ref{equ:rightcopairs}), (\ref{equ:leftcopairs}) satisfy the general B\'{e}zout identity
\begin{equation}
    \label{equ:GBezout}
    \begin{bmatrix}
        \wh{V} & -\wh{U}\\
        -\wh{N}  & \wh{M}
    \end{bmatrix}
    \begin{bmatrix}
               M & U\\
               N  & V
    \end{bmatrix}
    =
    \begin{bmatrix}
        I & 0\\
        0 & I
    \end{bmatrix}.
\end{equation}
%
\end{lem}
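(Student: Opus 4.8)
The plan is to produce the eight factors explicitly from a stabilizing state feedback and a stabilizing output injection, and then to verify the generalized B\'ezout identity (\ref{equ:GBezout}) by a single state-space computation; coprimeness of the pairs $(N,M)$ and $(\wh N,\wh M)$ together with the B\'ezout identities (\ref{equ:Bezout}) are immediate consequences of (\ref{equ:GBezout}). Since $(A,\mathbfit{B}_2,\mathbfit{C}_2)$ is stabilizable and detectable, there exist $F\in\mC^{\mu\times 2n}$ and $L\in\mC^{2n\times\mu}$ making $A_F:=A+\mathbfit{B}_2F$ and $A_L:=A+L\mathbfit{C}_2$ Hurwitz. This is the only point at which the hypotheses enter, and it guarantees that every factor in (\ref{equ:rightcopairs})--(\ref{equ:leftcopairs}) lies in $\cRH_\infty$, because each shares one of the stable resolvents $(sI-A_F)^{-1}$ or $(sI-A_L)^{-1}$.

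Next I would confirm that (\ref{equ:rightcopairs})--(\ref{equ:leftcopairs}) really factor $\cP_{22}$. Both $M$ and $\wh M$ have feedthrough $I$, hence are invertible in $\cRH_\infty$; inverting $M$ cancels the state feedback and returns the realization $\left[\begin{smallmatrix} A & \mathbfit{B}_2\\ -F & I\end{smallmatrix}\right]$, and composing $N$ with $M^{-1}$ in series collapses to the realization $(A,\mathbfit{B}_2,\mathbfit{C}_2,\mathbfit{D}_{22})$ of $\cP_{22}$, so that $\cP_{22}=NM^{-1}$; the dual computation driven by the output injection $L$ gives $\cP_{22}=\wh M^{-1}\wh N$. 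These are the right and left coprime factorizations in (\ref{fact}).

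The main work is verifying (\ref{equ:GBezout}). I would form the series realization of the $(2\times2)$-block product on the doubled state space $\mC^{2n}\oplus\mC^{2n}$, ordering the states as (left factor, right factor), with $\ell$ and $r$ labelling the blocks of the left and right factor arrays. Two cancellations do all the work. First, the product $D_\ell D_r$ of the two feedthrough matrices is exactly $I$, since the $\pm\mathbfit{D}_{22}$ entries cancel. Second, the off-diagonal coupling block $B_\ell C_r$ of the compound state matrix simplifies to $A_L-A_F$, so that the compound state matrix equals $\left[\begin{smallmatrix} A_L & A_L-A_F\\ 0 & A_F\end{smallmatrix}\right]$. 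Applying the similarity transformation $T=\left[\begin{smallmatrix} I & -I\\ 0 & I\end{smallmatrix}\right]$ block-diagonalizes this into $\diag(A_L,A_F)$ and, at the same time, sends the compound input matrix into one whose $A_L$-block rows vanish and the compound output matrix into one whose $A_F$-block columns vanish. Thus after the transformation the $A_L$-mode is uncontrollable and the $A_F$-mode is unobservable, so the entire dynamic part of the product drops out and only the feedthrough $I$ survives, which is precisely (\ref{equ:GBezout}). The one delicate point is the bookkeeping through the compound input, output and feedthrough blocks; it is rendered routine by first recording the identities $D_\ell D_r=I$ and $B_\ell C_r=A_L-A_F$, which are exactly what the chosen state feedback and output injection are designed to produce.
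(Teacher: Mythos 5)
Your proposal is correct and follows essentially the same route as the source the paper cites for this lemma (Zhou--Doyle--Glover): construct the eight factors from a stabilizing state feedback $F$ and output injection $L$, check $NM^{-1}=\wh M^{-1}\wh N=\cP_{22}$ by cancelling the feedback/injection, and verify (\ref{equ:GBezout}) by a cascade realization in which the identities $D_\ell D_r=I$ and $B_\ell C_r=A_L-A_F$ make the dynamic part uncontrollable/unobservable after the block-triangular change of state. All the key cancellations you record are right (up to the usual $T$ versus $T^{-1}$ convention in the similarity transformation), so nothing is missing.
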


\subsection{The Youla-Ku\v{c}era Parameterization}

The following lemma, based on \cite[Theorem~12.17, p. 316]{ZDG_1996}, applies the results on the Youla-Ku\v{c}era parameterization in the frequency domain to the closed-loop system being considered.

\begin{lem}
\label{lem:stab_set} 
Suppose the block $\cP_{22}$ of the modified plant transfer matrix $\cP$ in (\ref{equ:modplnt}) has the coprime factorizations over $\cRH_\infty$, described by (\ref{fact}).
Also, let the auxiliary transfer matrices $U, V, \wh{U}, \wh{V} \in \cRH_\infty$ in (\ref{equ:Bezout}) be  chosen so that
$    UV^{-1}=\wh{V}^{-1}\wh{U}
$,
which is equivalent to (\ref{equ:GBezout}). Then the set of all stabilizing controllers
 is parameterized by
\begin{align}
	\nonumber
    K&=(U + MQ)(V + NQ)^{-1}\\
    \label{equ:stab_ctrl:lft}
               &= (\wh{V} + Q \wh{N})^{-1}(\wh{U} + Q \wh{M})
               = {\rm LFT}(O_y, Q),
\end{align}
where the common parameter $Q\in \cRH_{\infty}$ of these factorizations satisfies
\begin{equation}
\label{inf3}
    \det (V+ N Q)(\infty)\ne 0,
\end{equation}
and $O_y:= {\small\begin{bmatrix}
        UV^{-1} & \wh{V}^{-1}\\
        V^{-1}    & -V^{-1}N
    \end{bmatrix}}$ is an auxiliary system.
\hfill$\square$
\end{lem}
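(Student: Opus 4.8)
The plan is to reduce the statement to the classical Youla-Ku\v{c}era theorem \cite[Theorem~12.17, p.~316]{ZDG_1996} and then verify that the three representations in (\ref{equ:stab_ctrl:lft}) coincide. First I would invoke Lemma~\ref{lem:equi}: under the assumed stabilizability and detectability of $(A,\mathbfit{B}_2,\mathbfit{C}_2)$, internal stability of the full closed-loop interconnection of Fig.~\ref{fig:system} is equivalent to internal stability of the reduced loop formed by $\cP_{22}$ and $K$ in Fig.~\ref{fig:system:stab}. Consequently, the set of controllers $K$ that stabilize the quantum plant coincides with the set of $K$ that internally stabilize $\cP_{22}$. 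Since $\cP_{22}$ and $K$ are ordinary rational transfer matrices and a doubly coprime factorization (\ref{fact})--(\ref{equ:GBezout}) of $\cP_{22}$ over $\cRH_\infty$ is furnished by the preceding lemma on stable coprime factors, this is now a purely classical stabilization problem to which the Youla-Ku\v{c}era parameterization applies verbatim.

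Applying that theorem gives $K=(U+MQ)(V+NQ)^{-1}$ as the complete set of internally stabilizing controllers, parameterized by the free parameter $Q\in\cRH_\infty$. The constraint (\ref{inf3}), $\det (V+NQ)(\infty)\ne 0$, is the well-posedness condition: it guarantees that $V+NQ$ is biproper, so that $(V+NQ)^{-1}\in\cRH_\infty$ and $K$ is a proper transfer matrix admitting a realization of the form (\ref{equ:hoc_freq}). The heart of the underlying classical argument --- and the step I expect to carry the real content --- is the necessity direction: for an arbitrary stabilizing $K$ one must exhibit a $Q\in\cRH_\infty$ producing it. This is done by taking a coprime factorization of $K$, using internal stability to conclude that the associated closed-loop matrix is invertible over $\cRH_\infty$, and then solving for $Q$ through the B\'{e}zout identities (\ref{equ:Bezout}); the delicate point is verifying that the resulting rational $Q$ is in fact both stable and proper, which rests on the double coprimeness encoded in (\ref{equ:GBezout}).

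It then remains to check that the two factored forms and the LFT form in (\ref{equ:stab_ctrl:lft}) agree. For the equality of the right and left forms I would cross-multiply, i.e.\ establish $(\wh{V}+Q\wh{N})(U+MQ)=(\wh{U}+Q\wh{M})(V+NQ)$, expand, and collapse the mixed terms using the four scalar identities contained in (\ref{equ:GBezout}), namely $\wh{V}U=\wh{U}V$, $\wh{M}N=\wh{N}M$, and $\wh{V}M-\wh{U}N=I=\wh{M}V-\wh{N}U$. For the LFT form I would substitute $O_y$ into the lower LFT $O_{y,11}+O_{y,12}Q(I-O_{y,22}Q)^{-1}O_{y,21}$ and simplify
\begin{equation*}
    UV^{-1}+\wh{V}^{-1}Q(I+V^{-1}NQ)^{-1}V^{-1}
    =
    UV^{-1}+\wh{V}^{-1}Q(V+NQ)^{-1},
\end{equation*}
using $V(I+V^{-1}NQ)=V+NQ$. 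The identity $M-UV^{-1}N=\wh{V}^{-1}$, which follows from $\wh{V}M-\wh{U}N=I$ together with $\wh{V}U=\wh{U}V$, then converts the right-hand side into $(U+MQ)(V+NQ)^{-1}$, establishing all three equalities.

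Finally, I would flag the one genuinely quantum caveat. Lemma~\ref{lem:stab_set} parameterizes every $K$ that stabilizes the closed loop in the classical transfer-matrix sense; an additional restriction on $Q$ is required to ensure that $K$ itself is $(J_m,J_m)$-unitary and hence represents an open quantum harmonic oscillator in the sense of Lemma~\ref{lem:PR_Freq}. Imposing physical realizability is deferred, so the present statement is precisely the classical parameterization transported to the quantum closed-loop transfer matrices, with the reduction via Lemma~\ref{lem:equi} being the only step that is specific to the quantum feedback architecture.
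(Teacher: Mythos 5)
Your proposal is correct and follows the same route the paper intends: the paper does not prove this lemma but simply cites the classical Youla-Ku\v{c}era parameterization \cite[Theorem~12.17, p.~316]{ZDG_1996} after reducing, via Lemma~\ref{lem:equi}, to the internal stabilization of $\cP_{22}$. Your additional verifications --- the cross-multiplication using the four identities in (\ref{equ:GBezout}) to reconcile the left and right MFDs, and the identity $M-UV^{-1}N=\wh{V}^{-1}$ to recover the LFT form with $O_y$ --- are accurate and merely fill in details the paper leaves to the cited reference, and your closing remark that physical realizability is deferred matches the paper's subsequent treatment in Lemma~\ref{JJu:coprime} and Theorems~\ref{thm:stab_ctrl}--\ref{thm:stab_ctrl:coherent}.
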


In what follows, the class of stabilizing controllers will be parameterized using MFDs. However, they can also be parameterized in the LFT framework due to the relationship between MFD and LFT representations \cite[Lemmas 10.1 and 10.2, pp. 241--242]{ZDG_1996}.

\section{QUANTUM VERSION OF THE YOULA-KU\v{C}ERA PARAMETERIZATION}\label{sec:QYK}
We will now employ the material of Sections~\ref{sec:system}--\ref{sec:cntlpara} in order to describe stabilizing coherent quantum controllers in the frequency domain. The following lemma represents $(J_{\mu},J_{\mu})$-unitarity in terms of the Youla-Ku\v{c}era parameter $Q$ from (\ref{equ:stab_ctrl:lft}).
\begin{lem}
     \label{JJu:coprime}
     Suppose the controller transfer matrix $K$ is factorized according to (\ref{equ:stab_ctrl:lft}). Then $K$ is $(J_{\mu},J_{\mu})$-unitary if and only if the parameter $Q\in \cRH_{\infty}$ satisfies
     \begin{equation}
         \label{equ:Y1_const}
         \Phi
         +
         Q^{\~}\Lambda
         +
         \Lambda^{\~}Q
         +
         Q^{\~}\Pi Q = 0
     \end{equation}
     for all $s \in \mC$, where
     \begin{align}
     	\label{eq:Phi}
     	\Phi
        &:=
        U^{\~}J_{\mu}U-V^{\~}J_{\mu} V,\\
     	\Lambda
        &:=
        M^{\~}J_{\mu}U-N^{\~}J_{\mu}V,\\
     	\label{eq:Pi}
     	\Pi
        &:=
        M^{\~}J_{\mu} M - N^{\~}J_{\mu} N.
	\end{align}
Furthermore, under the condition (\ref{inf3}), the feedthrough matrix $K(\infty)$ is well-defined and inherits $(J_{\mu}, J_{\mu})$-unitarity from $K$.\end{lem}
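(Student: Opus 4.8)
The plan is to work with the right matrix fractional description of $K$ from (\ref{equ:stab_ctrl:lft}), namely $K = XY^{-1}$ with $X := U+MQ$ and $Y := V+NQ$, and to convert the $(J_{\mu},J_{\mu})$-unitarity condition $K^{\~}J_{\mu}K = J_{\mu}$ into an identity purely in $X$ and $Y$. Using that the conjugate-system operation reverses products, $(XY^{-1})^{\~} = (Y^{\~})^{-1}X^{\~}$, the defining condition reads $(Y^{\~})^{-1}X^{\~}J_{\mu}XY^{-1} = J_{\mu}$, which, upon multiplying on the left by $Y^{\~}$ and on the right by $Y$, is equivalent to $X^{\~}J_{\mu}X = Y^{\~}J_{\mu}Y$. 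This step is valid wherever $Y$ is nonsingular; since $Y(\infty)=(V+NQ)(\infty)$ is invertible by (\ref{inf3}), $Y$ is singular only at finitely many $s$, and the resulting rational identity then extends to all $s\in\mC$ by continuity.

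First I would expand $X^{\~}J_{\mu}X - Y^{\~}J_{\mu}Y$ using the elementary algebra of the operation $(\cdot)^{\~}$: it reverses products, it is an involution with $(A^{\~})^{\~}=A$, and it fixes the real signature matrix, $J_{\mu}^{\~}=J_{\mu}$. Writing $X^{\~}=U^{\~}+Q^{\~}M^{\~}$ and $Y^{\~}=V^{\~}+Q^{\~}N^{\~}$ and multiplying out yields four groups of terms in each quadratic form. Collecting them by their dependence on $Q$ gives
\begin{equation*}
    X^{\~}J_{\mu}X - Y^{\~}J_{\mu}Y
    =
    \Phi + \Lambda^{\~}Q + Q^{\~}\Lambda + Q^{\~}\Pi Q,
\end{equation*}
where the $Q$-free terms reproduce $\Phi$ in (\ref{eq:Phi}); the terms linear in $Q$ give $(U^{\~}J_{\mu}M - V^{\~}J_{\mu}N)Q = \Lambda^{\~}Q$, the coefficient being exactly $\Lambda^{\~}$ because $(\cdot)^{\~}$ is an involution fixing $J_{\mu}$; the terms linear in $Q^{\~}$ give $Q^{\~}\Lambda$; and the quadratic term gives $Q^{\~}\Pi Q$ with $\Pi$ as in (\ref{eq:Pi}). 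Setting the left-hand side to zero is precisely (\ref{equ:Y1_const}), and since each manipulation above is reversible, this establishes both directions of the claimed equivalence for all $s\in\mC$.

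For the feedthrough assertion I would argue by continuity, exactly as in the remark following Lemma~\ref{lem:PR_Freq}. Under (\ref{inf3}) the matrix $Y(\infty)$ is nonsingular, so $K(\infty)=X(\infty)Y(\infty)^{-1}$ is well-defined. Letting $s\to\infty$ in $K^{\~}(s)J_{\mu}K(s)=J_{\mu}$ and using $K^{\~}(s)=(K(-\overline{s}))^{*}\to K(\infty)^{*}$ gives $K(\infty)^{*}J_{\mu}K(\infty)=J_{\mu}$, i.e. the constant feedthrough matrix inherits $(J_{\mu},J_{\mu})$-unitarity.

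The expansion itself is routine; the only points requiring care are the bookkeeping of the operation $(\cdot)^{\~}$ — its product-reversal and involution properties are what make the linear coefficient emerge as $\Lambda^{\~}$ rather than something unrelated — together with the justification that multiplying the identity through by $Y^{\~}$ and $Y$ is harmless despite the isolated singularities of $Y$, which is handled by the rationality of all factors and the normalization (\ref{inf3}).
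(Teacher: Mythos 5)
Your proof is correct and follows essentially the same route as the paper: rewrite $K^{\~}J_{\mu}K=J_{\mu}$ via the right factorization as $(U+MQ)^{\~}J_{\mu}(U+MQ)=(V+NQ)^{\~}J_{\mu}(V+NQ)$, expand and regroup to obtain (\ref{equ:Y1_const}), and pass to the limit $s\to\infty$ for the feedthrough claim. Your added care about the isolated singularities of $V+NQ$ and the rationality argument is a slight refinement of what the paper leaves implicit, but the substance is identical.
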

\begin{proof}
    The $(J_{\mu},J_{\mu})$-unitarity condition for the controller
    \begin{equation}
    \label{KJK}
        K^{\~}(s)J_{\mu}K(s) = J_{\mu},
    \end{equation}
    which must be satisfied for all $s \in \mC$,  is representable in terms of the right factorization from (\ref{equ:stab_ctrl:lft}) as
    \begin{equation}
    \label{JJ}
        \big(
            (U \!+\! M Q)
             (V  \!+\! N Q)^{-1}
        \big)^{\~}
        J_{\mu}
        (U \!+\! M Q)
        (V \!+\! N  Q)^{-1}
        =
        J_{\mu}.
    \end{equation}
    By the properties of system conjugation, (\ref{JJ}) is equivalent to
    $        (U + M  Q)^{\~}
        J_{\mu}
        (U + M Q)
        =
        (V + N  Q)^{\~}
        J_{\mu}
        (V + N Q)
$.
    After regrouping the terms,  the latter equality  takes the form
    \begin{align*}
        U^{\~}J_{\mu}U
        -
        V^{\~}J_{\mu} V
        & +
        Q^{\~}(M^{\~}J_{\mu}U - N^{\~}J_{\mu}V)\\
        & +
        (U^{\~}J_{\mu} M-V^{\~}J_{\mu} N)Q\\
        & +
        Q^{\~}(M^{\~}J_{\mu} M-N^{\~}J_{\mu} N )Q = 0.
    \end{align*}
     This leads to (\ref{equ:Y1_const}), with $\Phi$, $\Lambda$, $\Pi$ given by (\ref{eq:Phi})--(\ref{eq:Pi}). The fact that condition (\ref{inf3}) makes the feedthrough matrix $K(\infty)$ well-defined follows directly from (\ref{equ:stab_ctrl:lft}). The $(J_{\mu},J_{\mu})$-unitarity of $K(\infty)$ is established by taking the limit in (\ref{KJK}) as $s\to \infty$.
\end{proof}

The proof of Lemma~\ref{JJu:coprime} shows that the constraint (\ref{equ:Y1_const}) on the Youla-Ku\v{c}era parameter $Q$ inherits its quadratic nature from (\ref{KJK}). However, (\ref{equ:Y1_const}) becomes affine (over the field of reals)  with respect to $Q$ in a particular case when $\Pi=0$. In view of (\ref{fact}), the transfer function $\Pi$ in (\ref{eq:Pi}) is representable as
$
    \Pi = M^{\~}(J_{\mu} - \cP_{22}^{\~}J_{\mu} \cP_{22})M
$,
and hence, it vanishes if the block $\cP_{22}$ of the modified plant is $(J_{\mu}, J_{\mu})$-unitary.
Since $(J_{\mu},J_{\mu})$-unitarity (\ref{KJK}) and its  equivalent dual form $ KJ_{\mu}K^{\~} = J_{\mu} $ (cf. (\ref{equ:JJUnit}) and (\ref{dual})) impose the same constraints on the square transfer matrix $K$, a dual condition to the one described in Lemma~\ref{JJu:coprime} holds for the left factorization of the controller in (\ref{equ:stab_ctrl:lft}). This leads to a dual constraint on $Q$, which corresponds to (\ref{equ:Y1_const}), with $\Phi$, $\Lambda$, $\Pi$ being replaced with their counterparts expressed in terms of $\wh{N}$, $\wh{M}$, $\wh{U}$, $\wh{V}$. 
\begin{thm}
\label{thm:stab_ctrl}
Suppose the block $\cP_{22}$ of the modified plant transfer matrix $\cP$ in (\ref{equ:modplnt}) has the coprime factorizations over $\cRH_\infty$ described by (\ref{fact}).  Also, let the transfer matrices $U,V,\wh{U},\wh{V} \in \cRH_\infty$ in (\ref{equ:Bezout}) satisfy the general B{\'e}zout identity  (\ref{equ:GBezout}).
 Then the set of all stabilizing $(J_{\mu},J_{\mu})$-unitary controllers $K$ with a well-defined feedthrough matrix $K(\infty)$ is parameterized by (\ref{equ:stab_ctrl:lft}), where the parameter $Q$ belongs to the set
 \begin{equation}
 \label{cQ}
    \cQ:=
    \big\{
        Q\in \cRH_{\infty}\
        {\rm satisfying}\ (\ref{inf3})\ {\rm and}\ (\ref{equ:Y1_const})
    \big\}.
 \end{equation}
\end{thm}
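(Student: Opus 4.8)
The plan is to obtain Theorem~\ref{thm:stab_ctrl} as the intersection of two already-established characterizations: the Youla-Ku\v{c}era parameterization of all stabilizing controllers (Lemma~\ref{lem:stab_set}) and the $(J_{\mu},J_{\mu})$-unitarity criterion expressed in terms of the parameter $Q$ (Lemma~\ref{JJu:coprime}). Since both results are stated as equivalences, the argument reduces to a double inclusion at the level of the parameter $Q$, with no fresh computation required.

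First I would observe that the hypotheses of the theorem --- the coprime factorizations (\ref{fact}) together with the general B\'{e}zout identity (\ref{equ:GBezout}) --- are precisely those of Lemma~\ref{lem:stab_set}, because (\ref{equ:GBezout}) is equivalent to the normalization $UV^{-1}=\wh{V}^{-1}\wh{U}$ imposed there. Consequently, Lemma~\ref{lem:stab_set} already furnishes a correspondence between stabilizing controllers $K$ and parameters $Q\in\cRH_{\infty}$ satisfying (\ref{inf3}), with $K$ recovered from $Q$ through (\ref{equ:stab_ctrl:lft}). Coprimeness of the factorizations guarantees that this correspondence is one-to-one, so that properties of $K$ translate unambiguously into conditions on the single parameter $Q$.

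Next I would impose the additional requirement that $K$ be $(J_{\mu},J_{\mu})$-unitary. By Lemma~\ref{JJu:coprime}, for any $K$ written in the factored form (\ref{equ:stab_ctrl:lft}), the identity (\ref{KJK}) holds if and only if $Q$ satisfies the quadratic constraint (\ref{equ:Y1_const}) with $\Phi,\Lambda,\Pi$ as in (\ref{eq:Phi})--(\ref{eq:Pi}); moreover, condition (\ref{inf3}) already makes $K(\infty)$ well-defined, and the feedthrough then inherits $(J_{\mu},J_{\mu})$-unitarity. Thus, starting from a stabilizing $(J_{\mu},J_{\mu})$-unitary controller with well-defined feedthrough, its Youla-Ku\v{c}era parameter lies in $\cRH_{\infty}$ and satisfies both (\ref{inf3}) and (\ref{equ:Y1_const}), i.e.\ $Q\in\cQ$. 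Conversely, any $Q\in\cQ$ produces, via (\ref{equ:stab_ctrl:lft}), a controller that is stabilizing by Lemma~\ref{lem:stab_set} and $(J_{\mu},J_{\mu})$-unitary with well-defined feedthrough by Lemma~\ref{JJu:coprime}. Combining the two inclusions yields the claimed parameterization by the set $\cQ$ in (\ref{cQ}).

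I do not anticipate a substantial obstacle, since the statement is a synthesis of the two lemmas rather than a new technical result. The only point that calls for care is the bookkeeping ensuring that the intersection of the two conditions on $Q$ corresponds exactly to the intersection of the two properties of $K$; this rests on the bijectivity supplied by coprimeness and on the fact that the ``well-defined feedthrough'' requirement is automatically subsumed by (\ref{inf3}), so that $\cQ$ captures neither more nor fewer controllers than intended.
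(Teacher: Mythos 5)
Your proposal is correct and follows essentially the same route as the paper: the printed proof likewise obtains the result by combining Lemma~\ref{lem:stab_set} (the Youla-Ku\v{c}era parameterization of stabilizing controllers) with Lemma~\ref{JJu:coprime} (the $(J_{\mu},J_{\mu})$-unitarity constraint on $Q$), with (\ref{inf3}) handling the well-definedness of $K(\infty)$. Your added remarks on the bijectivity of the correspondence supplied by coprimeness only make explicit what the paper leaves implicit.
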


\begin{proof}
This theorem is proved by combining Lemmas~\ref{lem:stab_set} and~\ref{JJu:coprime}. Indeed, since the underlying coprime factorizations are assumed to satisfy the general B\'{e}zout identity (\ref{equ:GBezout}), then (\ref{equ:Y1_const})  can be applied to the common parameter $Q$ in (\ref{equ:stab_ctrl:lft}) in order to describe all stabilizing $(J_{\mu},J_{\mu})$-unitary controllers $K$. Their feedthrough matrices $K(\infty)$ are well-defined provided the additional condition (\ref{inf3}) is also satisfied. The resulting class of admissible $Q$ is given by (\ref{cQ}).
\end{proof}

Theorem~\ref{thm:stab_ctrl} provides a frequency domain parameterization of all stabilizing $(J_{\mu},J_{\mu})$-unitary controllers with a well-defined feedthrough matrix and leads to the following theorem.
\begin{thm}
\label{thm:stab_ctrl:coherent}
Under the assumptions of Theorem~\ref{thm:stab_ctrl}, the MFDs (\ref{equ:stab_ctrl:lft}) describe a set of stabilizing PR quantum controllers $K$, where the parameter $Q$ belongs to the following class $\wh{\cQ}$ defined in terms of (\ref{cQ}) and (\ref{mD}):
\begin{align}
\nonumber
    \wh{\cQ}
    :=
    \big\{
        Q\in \cQ:\ &
        K\ {\rm in}\
        (\ref{equ:stab_ctrl:lft})\
        {\rm is\ spectrally\ generic,}
        \\
 \label{cQhat}
        & {\rm and}\ K(\infty)\in \mD_{\mu}
    \big\}.
 \end{align}
 \end{thm}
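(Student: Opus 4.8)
The plan is to read Theorem~\ref{thm:stab_ctrl:coherent} as the result of superimposing the frequency-domain physical realizability (PR) criterion of Lemma~\ref{lem:PR_Freq} onto the parameterization of stabilizing $(J_{\mu},J_{\mu})$-unitary controllers already secured in Theorem~\ref{thm:stab_ctrl}. By Lemma~\ref{lem:PR_Freq}, a square transfer matrix $K$ of order $2\mu$ represents an open quantum harmonic oscillator, that is, $K$ is PR, if and only if it possesses a spectrally generic minimal state-space realization, it satisfies the $(J_{\mu},J_{\mu})$-unitarity condition (\ref{equ:JJUnit}), and its feedthrough $K(\infty)$ lies in $\mD_{\mu}$. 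The goal is therefore to match each of these three clauses against the conditions defining $\wh{\cQ}$ in (\ref{cQhat}).

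First I would fix $Q\in\wh{\cQ}$ and observe that $\wh{\cQ}\subseteq\cQ$ by construction, so Theorem~\ref{thm:stab_ctrl} immediately yields that the associated $K$ from (\ref{equ:stab_ctrl:lft}) is a stabilizing controller which is $(J_{\mu},J_{\mu})$-unitary and has a well-defined feedthrough $K(\infty)$. In particular, the $(J_{\mu},J_{\mu})$-unitarity clause of Lemma~\ref{lem:PR_Freq} is already discharged through membership of $Q$ in $\cQ$, via the equivalence established in Lemma~\ref{JJu:coprime} between the quadratic constraint (\ref{equ:Y1_const}) and $K^{\~}J_{\mu}K=J_{\mu}$, while the side condition (\ref{inf3}) guarantees that $K(\infty)$ is well-defined.

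Next I would verify the two remaining PR clauses using precisely the extra requirements built into $\wh{\cQ}$: the stipulation that $K$ be spectrally generic supplies the spectral genericity of the minimal realization demanded by Lemma~\ref{lem:PR_Freq}, and the stipulation $K(\infty)\in\mD_{\mu}$ supplies the feedthrough condition of that lemma. Assembling the three clauses, Lemma~\ref{lem:PR_Freq} certifies that $K$ is PR. Thus every $Q\in\wh{\cQ}$ produces a stabilizing PR quantum controller, which establishes the theorem; exhaustiveness within the stabilizing $(J_{\mu},J_{\mu})$-unitary family follows from Theorem~\ref{thm:stab_ctrl} together with the definition of $\wh{\cQ}$ as the intersection of $\cQ$ with the two PR-specific constraints.

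The delicate point, and the main obstacle, is the minimality qualification in Lemma~\ref{lem:PR_Freq}. Spectral genericity is a property of the dynamics matrix of a realization in the sense of Definition~\ref{def:gen}, whereas (\ref{equ:stab_ctrl:lft}) delivers $K$ only as a composition of factors from $\cRH_\infty$, whose natural realization need not be minimal. I would therefore need to argue that the phrase ``$K$ is spectrally generic'' in (\ref{cQhat}) is well-posed: after cancelling the uncontrollable and unobservable modes to obtain a minimal realization, the surviving spectrum coincides with the poles of $K$, so the spectral genericity recorded in (\ref{cQhat}) is exactly the hypothesis required by Lemma~\ref{lem:PR_Freq} and is independent of the particular non-minimal realization furnished by the factorization. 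Ensuring that this reduction is compatible with the condition (\ref{inf3}), which keeps $K(\infty)$ proper and well-defined under passage to the minimal realization, is where the real care lies; the remaining bookkeeping across the three clauses is routine.
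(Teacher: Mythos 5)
Your proposal is correct and follows essentially the same route as the paper, which likewise proves the theorem by combining Theorem~\ref{thm:stab_ctrl} with the frequency-domain PR criterion of Lemma~\ref{lem:PR_Freq}, matching the extra conditions in $\wh{\cQ}$ (spectral genericity and $K(\infty)\in\mD_{\mu}$) to the corresponding clauses of that lemma. Your added remark on the well-posedness of spectral genericity under passage to a minimal realization is a sensible refinement of a point the paper leaves implicit, but it does not change the argument.
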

\begin{proof}
The assertion of the theorem is established by combining Theorem~\ref{thm:stab_ctrl} with the frequency domain criterion of PR provided by Lemma~\ref{lem:PR_Freq}.
\end{proof}


Theorem~\ref{thm:stab_ctrl:coherent}
parameterizes a \emph{subset} of stabilizing PR quantum controllers in  the frequency domain through the representation (\ref{equ:stab_ctrl:lft}) and the set $\wh{\cQ}$ in (\ref{cQhat}). This subset of controllers does not exhaust all stabilizing coherent quantum controllers. However, the  discrepancy between these two classes of controllers is only caused by the technical condition of spectral genericity which comes from Lemma~\ref{lem:PR_Freq}.

\section{COHERENT  QUANTUM WEIGHTED $\mathcal{H}_2$ AND $\mathcal{H}_\infty$ CONTROL PROBLEMS IN THE FREQUENCY DOMAIN}\label{sec:H2}
The following lemma, which is  given here for completeness, employs the factorization approach in order to obtain a more convenient representation of the closed-loop transfer function.

\begin{lem}
    \label{lem:closed-loop}
    Under the assumptions of Theorem~\ref{thm:stab_ctrl}, for any stabilizing  controller $K$ parameterized by  (\ref{equ:stab_ctrl:lft}),   the corresponding closed-loop transfer matrix $G$ in (\ref{G}) is representable as
    \begin{equation}
    \label{GTTT}
        G
        =
        T_0 + T_1 Q T_2,
    \end{equation}
    where
    \begin{align}
    		\label{eq:T_0_eq:T_2}
    		\!\!\!\!\!\!T_0
                \!:=\!
                    \cP_{11}\! +\! \cP_{12}U\wh{M}\cP_{21},\quad
    		T_1
                \!:= \!
                    \cP_{12} M,\quad
    		T_2
                \!:=\!
                    \wh{M}\cP_{21}.\!\!\!\!
    	\end{align}
\end{lem}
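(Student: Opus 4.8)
The plan is to substitute the parameterized controller and a coprime factorization of $\cP_{22}$ directly into the lower LFT formula (\ref{G}), and then to use the general B\'ezout identity (\ref{equ:GBezout}) to evaluate the closed-loop inverse $(I-\cP_{22}K)^{-1}$ in closed form. The decisive bookkeeping choice is to pair the \emph{right} MFD $K=(U+MQ)(V+NQ)^{-1}$ of the controller from (\ref{equ:stab_ctrl:lft}) with the \emph{left} factorization $\cP_{22}=\wh{M}^{-1}\wh{N}$ from (\ref{fact}); this particular combination is what makes the B\'ezout relations cancel. Throughout, invertibility of $V+NQ$ (as a transfer function, and at $s=\infty$ via (\ref{inf3})) is what guarantees that $K$ and the LFT are well-defined, and is the only place where the hypotheses of Theorem~\ref{thm:stab_ctrl} are genuinely used beyond the pure algebra of (\ref{equ:GBezout}).

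First I would abbreviate $X:=U+MQ$ and $Y:=V+NQ$, so that $K=XY^{-1}$, and factor on the left to obtain
\begin{equation*}
    I-\cP_{22}K
    =
    I-\wh{M}^{-1}\wh{N}XY^{-1}
    =
    \wh{M}^{-1}\big(\wh{M}Y-\wh{N}X\big)Y^{-1}.
\end{equation*}
The crux is to expand the inner factor as $\wh{M}Y-\wh{N}X=(\wh{M}V-\wh{N}U)+(\wh{M}N-\wh{N}M)Q$ and to read off the bottom row of (\ref{equ:GBezout}): the bottom-right block gives $\wh{M}V-\wh{N}U=I$ and the bottom-left block gives $\wh{M}N-\wh{N}M=0$, so that $\wh{M}Y-\wh{N}X=I$. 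Hence $I-\cP_{22}K=\wh{M}^{-1}Y^{-1}$, which inverts to $(I-\cP_{22}K)^{-1}=Y\wh{M}$, and therefore $K(I-\cP_{22}K)^{-1}=XY^{-1}Y\wh{M}=(U+MQ)\wh{M}$.

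Substituting this back into (\ref{G}) and regrouping the affine and $Q$-linear parts yields
\begin{equation*}
    G
    =
    \cP_{11}+\cP_{12}\big(U\wh{M}+MQ\wh{M}\big)\cP_{21}
    =
    \big(\cP_{11}+\cP_{12}U\wh{M}\cP_{21}\big)+\big(\cP_{12}M\big)\,Q\,\big(\wh{M}\cP_{21}\big),
\end{equation*}
which is exactly (\ref{GTTT}) with $T_0,T_1,T_2$ as defined in (\ref{eq:T_0_eq:T_2}). I do not anticipate a serious analytical obstacle: once the left/right factorization pairing is fixed, the derivation is a short direct manipulation, and the only subtlety worth flagging explicitly is the well-definedness of $Y^{-1}=(V+NQ)^{-1}$, for which the coprimeness and the nondegeneracy condition (\ref{inf3}) from Theorem~\ref{thm:stab_ctrl} are invoked.
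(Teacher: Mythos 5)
Your proof is correct: the pairing of the right MFD $K=(U+MQ)(V+NQ)^{-1}$ with the left factorization $\cP_{22}=\wh{M}^{-1}\wh{N}$, together with the bottom row of the general B\'ezout identity (\ref{equ:GBezout}) (namely $\wh{M}V-\wh{N}U=I$ and $\wh{M}N-\wh{N}M=0$), gives $(I-\cP_{22}K)^{-1}=(V+NQ)\wh{M}$ and hence exactly (\ref{GTTT}) with the $T_i$ of (\ref{eq:T_0_eq:T_2}). The paper states this lemma ``for completeness'' without proof, deferring to the standard Youla-parameterization literature; your derivation is precisely that standard argument, including the correct observation that (\ref{inf3}) is what makes $(V+NQ)^{-1}$ and the LFT well defined.
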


Lemma~\ref{lem:closed-loop} allows the following coherent quantum weighted $\mathcal{H}_2$ and $\mathcal{H}_\infty$ control problems to be formulated in the frequency domain.

\subsection{Coherent Quantum Weighted $\cH_2$ Control Problem}

Using the representation (\ref{GTTT}), we formulate a coherent quantum weighted $\cH_2$ control problem as the constrained minimization problem
    \begin{align}
    	\label{H2}
        E:= \|W_{\rm out}G W_{\rm in}\|_2^2
        = \|\mbT_0 + \mbT_1Q\mbT_2\|_2^2
        \longrightarrow
        \min
    \end{align}
with respect to $Q \in \wh{\cQ}$, where the set $\wh{\cQ}$ is given by (\ref{cQhat}). Here,
\begin{equation}
	\label{eq:bT_0_eq:bT_2}
	\mbT_0
    :=
    W_{\rm out}T_0W_{\rm in},
    \qquad
	\mbT_1
    :=
    W_{\rm out}T_1,
    \qquad
	\mbT_2:=T_2W_{\rm in},
\end{equation}
where $T_0$, $T_1$, $T_2$ are defined by (\ref{eq:T_0_eq:T_2}). Also, $W_{\rm in}, W_{\rm out}\in \cRH_{\infty}$ are given strictly proper weighting transfer functions for the closed-loop system $G$ which ensure that $\mbT_0 + \mbT_1Q\mbT_2 \in \cH_2$.
    The $\cH_2$-norm $\|\cdot \|_2$ is associated with the inner product
$
    \bra \Gamma_1,\Gamma_2 \ket:=\frac{1}{2\pi}\int_{-\infty}^{+\infty} \bra \Gamma_1(i\omega),\Gamma_2(i\omega) \ket_{\rm F} \rd \omega
$.
By using the standard properties of inner products in complex Hilbert spaces \cite{reed1980},  the cost functional $E$ in (\ref{H2}) can be represented as
\begin{equation}
\label{prob:cost:H2}
    E
    =
    \|\mbT_0\|_2^2
    +
    2\Re \bra \wh{\mbT}_0, Q\ket
    +
    \bra
        Q,
        \wh{\mbT}_1 Q \wh{\mbT}_2
    \ket,
\end{equation}
where
\begin{equation}
\label{T0_T2}
	\wh{\mbT}_0 :=\mbT_1^{\~}\mbT_0\mbT_2^{\~},
\qquad
	\wh{\mbT}_1:=\mbT_1^{\~}\mbT_1,
\qquad
	\wh{\mbT}_2:=\mbT_2\mbT_2^{\~}.
\end{equation}
In comparison to the original coherent quantum LQG control problem \cite{NJP_2009}, the coherent quantum weighted $\mathcal{H}_2$ control problem (\ref{H2}) allows for considering the cost of the unavoidable quantum noise fed through the plant by the controller.

\subsection{ Coherent Quantum Weighted $\cH_\infty$ Control Problem}

Similarly to (\ref{H2}),  a coherent quantum weighted $\mathcal{H}_\infty$ control problem is formulated as the constrained minimization problem
  \begin{equation}
  \label{Hinf}
        \|G\|_\infty
        =
        \|\mbT_0 + \mbT_1Q\mbT_2\|_\infty
        \longrightarrow
        \min
  \end{equation}
with respect to $Q\in \wh{\cQ}$, where the set $\wh{\cQ}$ is defined by (\ref{cQhat}). Here, $\mbT_0$, $\mbT_1$ and $\mbT_2$ are given by  (\ref{eq:bT_0_eq:bT_2}),  where,  this time,  the weighting transfer functions   $W_{\rm in}, W_{\rm out} \in \cRH_{\infty}$ are not necessarily strictly proper. Recall that the norm in the Hardy space $\cH_\infty$ is defined by
     $
        \|\Gamma\|_\infty
        :=
        \sup_{\omega \in \mR}
        \sigma_{\max}(\Gamma(i\omega))
     $,
where $\sigma_{\max}(\cdot)$ denotes the largest singular value of a matrix.
Note that both problems (\ref{H2}) and (\ref{Hinf}) are organised as constrained versions of the model matching problem \cite{francis87}. Since the $\cH_2$ control problem is based on a Hilbert space norm, its solution can be approached by using a variational method in the frequency domain, which employs differentiation of the cost $E$ with respect to the Youla-Ku\v{c}era parameter $Q$ and is qualitatively different from the state-space techniques of \cite{VP_2013a}.

\section{PROJECTED GRADIENT DESCENT SCHEME FOR THE COHERENT  QUANTUM WEIGHTED $\cH_2$ CONTROL PROBLEM}\label{sec:PGS}
Suppose the set $\wh{\cQ}$ in (\ref{cQhat}) is nonempty, and hence, there exist stabilizing PR quantum controllers for a given quantum plant. 
By using the representation (\ref{prob:cost:H2}) and regarding the transfer function $Q \in \cRH_{\infty}$ as an independent optimization variable, it follows that the first variation of the cost functional  $E$ in (\ref{H2}) with respect to $Q$ can be computed as
\begin{equation}
\label{Egrad}
\delta E
    =
    \Re
    \bra
        \nabla E,
        \delta Q
    \ket,
    \qquad
    \nabla E
    := 2(\wh{\mbT}_0+\wh{\mbT}_1 Q \wh{\mbT}_2),
\end{equation}
where use is also made of (\ref{T0_T2}). In order to yield a PR quantum controller, $Q$ must satisfy the constraint (\ref{equ:Y1_const}) whose variation leads to
\begin{equation}
    \label{equ:Y2_const:var}
    \delta Q^{\~}(\Lambda+\Pi Q)+(\Lambda^{\~}+Q^{\~}\Pi)\delta Q=0.
\end{equation}
In view of the uniqueness theorem for analytic functions \cite[pp. 369--371]{M65}, the resulting constrained optimization problem can be reduced to that for purely imaginary $s = i\omega$, with $\omega\in \mR$. The transfer matrices $\delta Q$, satisfying (\ref{equ:Y2_const:var})  at frequencies  $\omega$ from a given set $\Omega\subset\mR$, form a real subspace of transfer functions
\begin{equation}
\label{cS}
    \!\cS
    \!\!:=\!
    \big\{\!
        X\!\in\!  \cRH_{\infty}\!:
        (X^*(\Lambda\!+\!\Pi Q)\!+\!(\Lambda^*\!\!+\!Q^*\Pi)X)\big|_{i\Omega}\!=\!0\!
    \big\}.\!\!\!\!\!\!\!\!\!
\end{equation}
For practical purposes, the set $\Omega$ is used to ``discretize'' the common frequency range of the given weighting transfer functions $W_{\rm in}$,  $W_{\rm out}$ in the coherent quantum weighted $\cH_2$ control problem (\ref{H2}). A numerical solution of  this problem can be implemented in the form of the following projected gradient descent scheme for finding a critical point of the cost functional $E$ with respect to $Q$ subject to (\ref{equ:Y1_const}) at a finite set of frequencies $\Omega$:
\begin{enumerate}
  \item initialize $Q\in \cRH_{\infty}$ so as to satisfy (\ref{equ:Y1_const}), which yields a stabilizing PR quantum controller; 
  \item calculate $\nabla E(i\omega)$ according to (\ref{Egrad}) for each frequency $\omega \in \Omega$;
  \item compute $\delta Q(i\omega) = -\alpha \mathrm{Proj}_{\cS}(\nabla E(i\omega))$ by using a projection onto the set  $\cS$ and a parameter $\alpha>0$;
  \item update $Q$ to $Q+\delta Q$, and go to the second step.
\end{enumerate}
The gradient projection $\mathrm{Proj}_{\cS}(\nabla E)$ onto  the set $\cS$ in (\ref{cS}) is computed in the third step of the algorithm by solving a convex optimization problem
on a Hilbert space with the direct sum of the Frobenius inner products of the projection errors at frequencies $\omega \in \Omega$.
This computation, which will be discussed elsewhere,  also involves the interpolation of transfer functions; see \cite{B90,hel98} for more details. The discrete frequency set $\Omega$ and the step-size parameter $\alpha$ can be chosen adaptively at each iteration of the algorithm. The outcome of the algorithm is considered to be acceptable if $Q$ belongs to the set $\wh{\cQ}$ defined by (\ref{cQhat}) of Theorem~\ref{thm:stab_ctrl:coherent}. In the case when $Q$ satisfies all the conditions in (\ref{cQhat}) except for the spectral genericity,  slightly different weighting matrices can be used in order to remedy the situation.
\section{CONCLUSION}\label{sec:Conclusion}

The set of stabilizing linear coherent quantum controllers for a given linear quantum plant has been parameterized using a Youla-Ku\v{c}era factorization  approach. This approach has provided a formulation of coherent quantum weighted $\cH_2$ and $\cH_\infty$ control problems for linear quantum systems in the frequency domain. These problems resemble constrained versions of the classical model matching problem. A projected gradient descent scheme has been outlined for numerical solution of the coherent quantum weighted $\cH_2$ control problem in the frequency domain. The proposed framework can also be used to develop tractable conditions for the existence of stabilizing quantum controllers for a given quantum plant, which remains an open problem. This is a subject of future research and will be considered in subsequent publications.

\end{document}